\documentclass[12pt]{revtex4}
\usepackage{epstopdf}
\expandafter\let\csname equation*\endcsname\relax
\expandafter\let\csname endequation*\endcsname\relax
\usepackage{amsmath}
\usepackage{amsfonts}
\usepackage{amssymb}
\usepackage{makeidx}

\newcommand{\beq}{\begin{equation}} \newcommand{\eeq}{\end{equation}}
\newcommand{\bea}{\begin{eqnarray}} \newcommand{\eea}{\end{eqnarray}}
\newcommand{\bear}{\begin{eqnarray*}} \newcommand{\eear}{\end{eqnarray*}}
\newcommand{\lb}{\label}
\newcommand{\rf}[1]{(\ref{#1})}

\newtheorem{theorem}{Theorem}[section]

\newtheorem{definition}{Definition}
\newtheorem{remark}{Remark}

\newenvironment{proof}[1][Proof]{\begin{trivlist}
\item[\hskip \labelsep {\bfseries #1}]}{\end{trivlist}}

\begin{document}

\title {An Action Principle for Action-dependent Lagrangians: toward an Action Principle to non-conservative systems}

\author{Matheus J. Lazo$^{1}$\footnote{matheuslazo@furg.br}, Juilson Paiva$^1$, Jo\~ao T. S. Amaral$^1$, and Gast\~ao S. F. Frederico$^2$}

\affiliation{$^1$Universidade Federal do Rio Grande, Rio Grande, RS, Brazil.\\$^2$Universidade Federal do Cear\'a, Campus de Russas, Russas, Brazil.}

\begin{abstract}

In this work, we propose an Action Principle for Action-dependent Lagrangian functions by generalizing the Herglotz variational problem to the case with several independent variables. We obtain  a necessary condition for the extremum equivalent to the Euler-Lagrange equation and, through some examples, we show that this generalized Action Principle enables us to construct simple and physically meaningful Action-dependent Lagrangian functions for a wide range of non-conservative classical and quantum systems. Furthermore, when the dependence on the Action is removed, the traditional Action Principle for conservative systems is recovered.



\end{abstract}

\maketitle


\section{Introduction}

The Action Principle was introduced in its mature formulation by Euler, Lagrange, and Hamilton and, since then, it has become one of the most fundamental principles of physics. In any (classical or quantum) physical theory, the dynamics of a conservative system is always given by the Action Principle. However, it is well known that the equation of motion for dissipative linear dynamical systems with constant coefficients cannot be obtained by the variational principle. A rigorous proof for the failure of the Action Principle in describing non-conservative systems was given in 1931 by Bauer \cite{bauer}, who proved that, from the traditional Action Principle, it is impossible to obtain a dissipation term proportional to the first order time derivative in the equation of motion. In order to deal with this difficulty, over the last century, several methods have been developed. Examples include time-dependent Lagrangians \cite{Stevens}, the Bateman approach by introducing auxiliary coordinates that describe the reverse-time system \cite{Morse} and Actions with fractional derivatives \cite{Riewe,LazoCesar}. Unfortunately, these approaches either give us non-physical Lagrangian functions (in the sense that they provide non-physical relations for the momentum and Hamiltonian of the system) or make use of non-local differential operators with algebraic properties different from usual derivatives (see \cite{Riewe,LazoCesar} for a detailed discussion).

In order to formulate an Action Principle for non-conservative systems, we take a different approach by generalizing the Herglotz variational problem \cite{Herg1,Herg2,GGB} and the ideas introduced in our recent work \cite{MJJG}.
In any physical theory, the Lagrangian function which defines the Action is constructed from the scalars of the theory, and from it, the corresponding dynamical equations can be obtained. However, the Action itself is a scalar and we might ask ourselves what would happen if the Lagrangian function itself were a function of the Action. For a one dimensional system, the answer to this question can be given by an almost forgotten variational problem proposed by Herglotz in 1930 \cite{Herg1,Herg2,GGB}. A reason for this problem to be almost unknown is that a covariant generalization for several independent variables is not direct and still lacks. In a recent work \cite{MJJG}, we generalized the Herglotz problem to construct a non-conservative gravitational theory from the Lagrangian formalism. However, the results presented in \cite{MJJG} are restricted only to gravitation, and a necessary condition (equivalent to the Euler-Lagrange equation) for this covariant variational problem has not yet been obtained.

In the present work, we formulate a generalization of the Action Principle introduced in \cite{MJJG} for arbitrary Action-dependent Lagrangian functions, and we obtain a generalized Euler-Lagrange equation for the problem. Both the Action Principle and the corresponding necessary condition reduce to the classical ones when the Lagrangian function does not depend on the Action. Furthermore, in order to investigate the potential of application of this generalized Action Principle to study non-conservative systems, we show, through four examples, that the simplest case of a Lagrangian linear on the Action gives us the correct equation of motion for both classical and quantum dissipative systems. In special, we consider a classical vibrating string under viscous forces, a non-conservative Electromagnetic Theory, and dissipative Schr\"{o}dinger and Klein-Gordon equations.

The paper is organized as follows. In Section \ref{sec2} we show in detail how the Herglotz variational principle can be generalized to the case involving several independent variables. A particularly important consequence of this is the resulting set of the generalized Euler-Lagrange equations of motion. Section \ref{sec3} is devoted to applications of the generalized Action principle to some (classical and quantum) non-conservative systems, described by Lagrangian functions which depend on one or more fields. We conclude in Section \ref{sec4}.


\section{Action Principle for Action-dependent Lagrangians}\label{sec2}

The variational problem proposed by Herglotz in 1930 \cite{Herg1,Herg2} consists in the problem of determining the path $x(t)$ that  extremizes (minimizes or maximizes) $S(b)$, where $S(t)$ is a solution of
\beq
\lb{H}
\begin{split}
\dot{S}(t)&=L(t,x(t),\dot{x}(t),S(t)),\;\;\; t\in [a,b]\\
S(a)&=s_a, \;\;\; x(a)=x_a, \;\;\; x(b)=x_b, \;\;\; s_a,x_a,x_b \in \mathbb{R}.
\end{split}
\eeq
It is important to notice that \rf{H} represents a family of differential equations since for each function $x(t)$ a different differential equation arises. Therefore, $S(t)$ depends on $x(t)$. The problem reduces to the classical fundamental problem of the calculus of variations if the Lagrangian function $L$ does not depend on $S(t)$. In this case, we have:
\beq
\lb{H1}
\begin{split}
\dot{S}(t)&=L(t,x(t),\dot{x}(t)),\;\;\; t\in [a,b]\\
S(a)&=s_a, \;\;\;  x(a)=x_a, \;\;\; x(b)=x_b, \;\;\; s_a,x_a,x_b \in \mathbb{R},
\end{split}
\eeq
then, by integrating \eqref{H1}, we obtain the classical variational problem
\beq
\lb{H2}
S(b)=\int_a^b \tilde{L}(t,x(t),\dot{x}(t))\;dt\longrightarrow {\mbox{extremum}},
\eeq
where $x(a)=x_a$, $x(b)=x_b$, and
\beq
\lb{H3}
\tilde{L}(t,x(t),\dot{x}(t))=L(t,x(t),\dot{x}(t)) +\frac{s_a}{b-a}.
\eeq
For what follows it is important to notice, from \eqref{H2}, that for a given function $x(t)$ the functional $S$
reduces to a function on the boundaries $a,b$ of the domain $[a,b]$.

Herglotz proved \cite{Herg1,Herg2} that a necessary condition for a path $x(t)$ to imply an extremum of the variational problem \eqref{H} is given by the generalized Euler-Lagrange equation:
\beq
\lb{HEL}
\frac{\partial L}{\partial x} -\frac{d}{dt}\frac{\partial L}{\partial \dot{x}}+\frac{\partial L}{\partial S}\frac{\partial L}{\partial \dot{x}}=0.
\eeq
It should be noticed that in the case of the classical problem of the calculus of variation \eqref{H2}, we have $\frac{\partial L}{\partial S}=0$, and the differential equation \eqref{HEL} reduces to the classical Euler-Lagrange equation. Moreover, the application of Herglotz problem to non-conservative systems is evident even in the simplest case, where the dependence of the Lagrangian function on the Action is linear. For example, the function
\beq
\lb{H3b}
L=\frac{m\dot{x}^ 2}{2}-U(x)-\frac{\gamma}{m} S
\eeq
describes a dissipative system with a point particle of mass $m$ under a potential $U(x)$ and a viscous force with a resistance coefficient $\gamma$. From \eqref{HEL}, the resulting equation of motion
\beq
\lb{H3c}
m\ddot{x}+\gamma \dot{x}=F
\eeq
includes the well-known dissipative term proportional to the velocity $\dot{x}$, where $\ddot{x}$ is the particle acceleration and $F=-\frac{d U}{dx}$ is the external force. In this context, the linear term $\frac{\gamma}{m} S$ in the Lagrangian function \eqref{H3b} can be interpreted as a potential function for the non-conservative force. Furthermore, the Lagrangian given by \eqref{H3b} is physical in the sense it provides us with physically meaningful relations for the momentum and the Hamiltonian \cite{Riewe,LazoCesar}. If we define the canonical variables
\beq
\lb{H3d}
q=x, \qquad p=\frac{\partial L}{\partial \dot{q}}=m\dot{x},
\eeq
we obtain the Hamiltonian
\beq
\lb{H3e}
H=qp-L=\frac{m\dot{x}^ 2}{2}+U(x)+\frac{\gamma}{m} S.
\eeq
From \eqref{H3d} and \eqref{H3e}, we can see that the Lagrangian \eqref{H3b} is physical in the sense it provides us a correct relation for the momentum $p= m\dot{x}$, and a physically meaningful Hamiltonian given by the sum of all energies.

On the other hand, for a scalar field $\phi(x^{\mu})=\phi(x^1,x^2,\cdots,x^d)$ defined in a domain $\Omega \in \mathbb{R}^d$ ($d=1,2,3,\cdots$), the classical problem of variation calculus deals with the problem to find $\phi$ that extremizes the functional
\beq
\lb{H4}
S(\delta\Omega)=\int_{\delta\Omega}\mathcal{L}\left(x^\mu,\phi(x^\mu),\partial_\nu\phi(x^\mu)\right)d^dx,
\eeq
where $\delta\Omega$ is the boundary of $\Omega$, and $\phi$ satisfies the boundary condition $\phi(\delta\Omega)=\phi_{\delta\Omega}$ with $\phi_{\Omega}:\delta \Omega \longrightarrow \mathbb{R}^n$. Unfortunately, although the Herglotz problem was introduced in 1930, a covariant generalization of \eqref{H} for several independent variables is not direct and still lacks. The cornerstone of a generalization of the Herglotz problem for fields is to note that, as in \eqref{H2}, for a given fixed $\phi$ the functional $S$, defined in \eqref{H4}, reduces to a function of the boundary $\delta \Omega$. In this context, if there is a differentiable vector field $s^{\mu}$ such that
\beq
\lb{H5}
S(\delta\Omega)=\int_{\delta\Omega}s^\mu n_\mu\; d\sigma,
\eeq
then, from the Divergence Theorem we obtain
\beq
\lb{H6}
S(\delta\Omega)=\int_{\delta\Omega}s^\mu n_\mu d\sigma=\int_{\Omega}\partial_\mu s^\mu d^dx=\int_{\Omega}\mathcal{L}\left(x^\mu,\phi(x^\mu),\partial_\nu\phi(x^\mu),s^\mu\right) d^dx,
\eeq
where we consider that $\delta\Omega$ is an orientable Jordan surface, $n_{\mu}$ is a normal vector to it and $d\sigma$ is the surface differential. Consequently, we can generalize the Herglotz variational principle as follows:

\begin{definition}[Fundamental Problem]
\lb{HF}
Let the action-density field $s^{\mu}$ be a differentiable vector field on $\Omega \in \mathbb{R}^d$. The fundamental problem of Herglotz variational principle for fields consists in determining the field $\phi$ that  extremizes (minimizes or maximizes) $S(\delta\Omega)$, where $S(\delta\Omega)$ is given by
\beq
\lb{H7}
\begin{split}
&\partial_\mu s^\mu=\mathcal{L}\left(x^\mu,\phi(x^\mu),\partial_\mu\phi(x^\mu),s^\mu\right),\quad x^\mu=(x^1, x^2,..., x^d)\in\Omega\\
&S(\delta\Omega)=\int_{\delta\Omega}s^\mu n_\mu d\sigma,\quad\phi(\delta\Omega)=\phi_{\delta\Omega},\quad \phi(\delta\Omega):\delta\Omega\longrightarrow \mathbb{R}.
\end{split}
\eeq
\end{definition}

Like in the original Herglotz problem, it is easy to notice that our Action functional defined by \eqref{H7} reduces to the usual Action \eqref{H4} when the Lagrangian function is independent of the action-density field $s^{\mu}$. Furthermore, we can prove the following condition for the extremum of \eqref{H7}:

\begin{theorem}[Generalized Euler-Lagrange equation for non-conservative fields]
\lb{HELF}
Let $\partial_{s^\mu}\mathcal{L}=\gamma_\mu$ be a gradient $\gamma_\mu=\partial_{\mu}f(x^\nu)=(\partial_{x_1}f,\cdots,\partial_{x_d}f)$ of a scalar field $f:\Omega \longrightarrow \mathbb{R}$, and let $\phi^*$ be the fields that extremize $S(\delta\Omega)$ defined in \eqref{H7}. Then, the field $\phi^*$ satisfies the generalized Euler-Lagrange equation
\beq
\label{GHEL}
\dfrac{\partial\mathcal{L}}{\partial \phi^* }-\frac{d}{dx^\mu}\left(\dfrac{\partial \mathcal{L} }{\partial\left(\partial_\mu\phi^*\right)}\right)+\gamma_\mu \dfrac{\partial \mathcal{L} }{\partial\left(\partial_\mu\phi^*\right)}=0.
\eeq
\end{theorem}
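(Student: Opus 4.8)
The plan is to adapt Herglotz's original one-dimensional argument to the field setting, the essential new ingredient being that the single variable $t$ is replaced by a domain $\Omega$, so that the ordinary differential equation defining $S$ becomes the divergence constraint $\partial_\mu s^\mu=\mathcal{L}$ and the step ``integrate the linear ODE for the variation with an integrating factor'' must be upgraded to an application of the divergence theorem. First I would introduce a one-parameter family of admissible fields $\phi=\phi^*+\epsilon\,\eta$ with $\eta|_{\delta\Omega}=0$, so that the prescribed boundary data in \eqref{H7} is preserved, and set $\theta^\mu:=\partial_\epsilon s^\mu|_{\epsilon=0}$ for the induced variation of the action-density field. Differentiating the constraint in \eqref{H7} with respect to $\epsilon$ at $\epsilon=0$ and using $\partial_{s^\mu}\mathcal{L}=\gamma_\mu$ produces the linearized equation
\beq
\lb{linvar}
\partial_\mu\theta^\mu-\gamma_\mu\theta^\mu=\frac{\partial\mathcal{L}}{\partial\phi^*}\,\eta+\frac{\partial\mathcal{L}}{\partial(\partial_\nu\phi^*)}\,\partial_\nu\eta .
\eeq
Since $S(\delta\Omega)=\int_{\delta\Omega}s^\mu n_\mu\,d\sigma=\int_\Omega\partial_\mu s^\mu\,d^dx$, the first variation is $\delta S=\int_\Omega\partial_\mu\theta^\mu\,d^dx$, and the extremum is characterized by $\delta S=0$ for every admissible $\eta$.

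The key step, and the place where the hypothesis that $\gamma_\mu$ be a gradient $\gamma_\mu=\partial_\mu f$ is indispensable, is the construction of an integrating factor. In Herglotz's one-dimensional problem the factor is $\exp(-\int(\partial L/\partial S)\,dt)$, which always exists; the analogue here is $E:=e^{-f}$, for which the identity $\partial_\mu E=-\gamma_\mu E$ holds globally precisely because $\gamma_\mu$ is exact. Multiplying \eqref{linvar} by $E$ then collapses the left-hand side into a pure divergence,
\beq
\lb{exact}
\partial_\mu\!\left(E\,\theta^\mu\right)=E\left(\frac{\partial\mathcal{L}}{\partial\phi^*}\,\eta+\frac{\partial\mathcal{L}}{\partial(\partial_\nu\phi^*)}\,\partial_\nu\eta\right).
\eeq
I would then integrate \eqref{exact} over $\Omega$, integrate the $\partial_\nu\eta$ term by parts, and use $\partial_\nu\!\left(E\,\frac{\partial\mathcal{L}}{\partial(\partial_\nu\phi^*)}\right)=E\left(\frac{d}{dx^\nu}\frac{\partial\mathcal{L}}{\partial(\partial_\nu\phi^*)}-\gamma_\nu\frac{\partial\mathcal{L}}{\partial(\partial_\nu\phi^*)}\right)$ to recombine the common factor $E$. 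Every surface integral thereby generated carries either $\eta$, which vanishes on $\delta\Omega$, or the boundary value of $\theta^\mu$, and one is left with
\beq
\lb{final}
\int_{\delta\Omega}E\,\theta^\mu n_\mu\,d\sigma=\int_\Omega E\left(\frac{\partial\mathcal{L}}{\partial\phi^*}-\frac{d}{dx^\nu}\frac{\partial\mathcal{L}}{\partial(\partial_\nu\phi^*)}+\gamma_\nu\frac{\partial\mathcal{L}}{\partial(\partial_\nu\phi^*)}\right)\eta\,d^dx.
\eeq
Once the left-hand boundary term is shown to vanish, the conclusion is immediate: since $E=e^{-f}>0$ and $\eta$ is arbitrary in the interior, the fundamental lemma of the calculus of variations forces the bracket to vanish pointwise, which is exactly \eqref{GHEL}.

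I expect the main obstacle to be twofold. The first and conceptually decisive point is the gradient hypothesis itself: in more than one independent variable a vector field $\gamma_\mu$ need not admit a scalar potential, and without $\gamma_\mu=\partial_\mu f$ there is no global $E$ converting \eqref{linvar} into the exact form \eqref{exact}; the left-hand side would then fail to be a total divergence, the divergence theorem could not be applied, and the reduction to the fundamental lemma would break down. This is precisely why the assumption is what makes the covariant generalization possible at all, and why it is invisible in the one-dimensional case, where every coefficient is trivially an exact differential. The second delicate point is justifying that the boundary integral $\int_{\delta\Omega}E\,\theta^\mu n_\mu\,d\sigma$ in \eqref{final} vanishes at the extremum. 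This is the field analogue of the endpoint bookkeeping in the original problem, where $\theta$ is zero at the fixed initial point and at the free point by the condition $\delta S(b)=0$; I would handle it by carrying along the prescribed boundary data for $s^\mu$ together with the stationarity condition $\delta S=0$ and verifying that, under these, the variation $\theta^\mu$ contributes nothing on $\delta\Omega$.
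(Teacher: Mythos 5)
Your overall strategy is the same as the paper's: vary $\phi=\phi^*+\epsilon\eta$ with $\eta|_{\delta\Omega}=0$, obtain the pointwise linearized equation for $\theta^\mu=\partial_\epsilon s^\mu|_{\epsilon=0}$ (the paper imposes the identical equation, its \eqref{PH9}, as a ``sufficient condition'' after working under the integral sign; your direct differentiation of the constraint $\partial_\mu s^\mu=\mathcal{L}$ is, if anything, cleaner), exploit the hypothesis $\gamma_\mu=\partial_\mu f$ through the integrating factor $e^{-f}$, integrate by parts, and finish with the fundamental lemma. Your computation up to the identity $\int_{\delta\Omega}e^{-f}\theta^\mu n_\mu\,d\sigma=\int_\Omega e^{-f}\left(\frac{\partial\mathcal{L}}{\partial\phi^*}-\frac{d}{dx^\nu}\frac{\partial\mathcal{L}}{\partial(\partial_\nu\phi^*)}+\gamma_\nu\frac{\partial\mathcal{L}}{\partial(\partial_\nu\phi^*)}\right)\eta\,d^dx$ is correct and reproduces the content of the paper's \eqref{PH13} in a more streamlined way.

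However, the step you explicitly defer --- the vanishing of the weighted boundary flux $\int_{\delta\Omega}e^{-f}\theta^\mu n_\mu\,d\sigma$ --- is a genuine gap, and the fix you sketch cannot work as stated. Stationarity yields only the \emph{unweighted} condition $\delta S=\int_{\delta\Omega}\theta^\mu n_\mu\,d\sigma=0$; since $e^{-f}$ is not constant, this single scalar equation does not imply that the weighted integral vanishes. In the one-dimensional Herglotz problem the issue is invisible because the boundary consists of two points: $\theta(a)=0$ is forced by the prescribed value $S(a)=s_a$, and stationarity then forces $\theta(b)=0$ \emph{pointwise}, so any weight may be inserted. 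In $d>1$ no such pointwise information is available, and your appeal to ``prescribed boundary data for $s^\mu$'' rests on data that the problem \eqref{H7} does not contain: only $\phi$ is prescribed on $\delta\Omega$, and $s^\mu n_\mu$ cannot be fixed there, since $\int_{\delta\Omega}s^\mu n_\mu\,d\sigma$ is the very quantity being extremized. The paper closes this gap by a different (and itself delicate) argument: writing $\theta^\mu=A^\mu e^{f}$, it invokes the claim that the stationarity condition \eqref{PH3} ``should hold for any domain'' to pass from $\int_{\delta\Omega}\theta^\mu n_\mu\,d\sigma=0$ to $\int_{\delta\Omega}A^\mu n_\mu\,d\sigma=\int_\Omega\partial_\mu A^\mu\,d^dx=0$ (its \eqref{PH12}), which is exactly the weighted statement you need. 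Without that domain-arbitrariness argument, or some substitute for it, your proof does not reach \eqref{GHEL}.
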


\begin{proof}
Let us define a family of fields $\phi$ (weak variations) such that
\beq
\lb{PH1}
\phi(x^\mu)=\phi^*(x^\mu)+\varepsilon\eta(x^\mu),
\eeq
where $\varepsilon \in \mathbb{R}$ and $\eta(x^\mu)$ is a field satisfying the boundary condition $\eta(\delta\Omega)=0$.
Now, after integrating both sides of the differential equation in \eqref{H7} over $\Omega$ we get
\beq
\lb{PH5}
S(\delta\Omega)=\int_\Omega \mathcal{L}\left(x^\mu,\phi(x^\mu),\partial_\nu\phi(x^\mu),s^\mu\right)d^dx,
\eeq
and, taking the derivative with respect to $\varepsilon$ we obtain the following relation
\beq
\lb{PH6}
\begin{split}
\dfrac{dS(\delta\Omega)}{d\varepsilon}&=\int_\Omega\dfrac{d}{d\varepsilon}\mathcal{L}\left(x^\mu,\phi(x^\mu),\partial_\mu\phi(x^\mu),s^\mu\right)d^dx
\\&=\int_\Omega\left[\eta\dfrac{\partial \mathcal{L}}{\partial\phi}+\partial_\mu\eta\dfrac{\partial \mathcal{L}}{\partial\left(\partial_\mu\phi\right)}+\gamma_\mu\dfrac{ds^\mu}{d\varepsilon}\right]d^dx,
\end{split}
\eeq
where for simplicity we write $\frac{\partial \mathcal{L}}{\partial (\partial \phi)}=\left(\frac{\partial\mathcal{L}}{\partial (\partial_1 \phi)},\frac{\partial \mathcal{L}}{\partial (\partial_2 \phi)},...,\frac{\partial \mathcal{L}}{\partial (\partial_d \phi)}\right)$.
On the other hand, we also have from \eqref{H7}
\beq
\lb{PH7}
\dfrac{dS(\delta\Omega)}{d\varepsilon}=\int_{\delta\Omega}\dfrac{ds^\mu}{d\varepsilon}n_\mu d\sigma=\int_\Omega\partial_\mu\dfrac{ds^\mu}{d\varepsilon}d^dx.
\eeq
By inserting \eqref{PH7} into \eqref{PH6} we get
\beq
\lb{PH8}
\int_\Omega\left[\partial_\mu\dfrac{ds^\mu}{d\varepsilon}-\eta\dfrac{\partial \mathcal{L}}{\partial\phi}-\partial_\mu\eta\dfrac{\partial \mathcal{L}}{\partial(\partial_\mu\phi)}-\gamma_\mu\dfrac{ds^\mu}{d\varepsilon}\right]d^dx=0.
\eeq
A sufficient condition to satisfy \eqref{PH8} for any domain $\Omega$ is
\beq
\lb{PH9}
\partial_\mu\zeta^\mu-\eta\dfrac{\partial \mathcal{L}}{\partial\phi}-\partial_\mu\eta\dfrac{\partial \mathcal{L}}{\partial(\partial_\mu\phi)}-\gamma_\mu\zeta^\mu=0,
\eeq
where $\zeta^\mu=\dfrac{ds^\mu}{d\varepsilon}$. Since $\gamma_\mu=\partial_\mu f(x^\nu)$ is a gradient vector on $\Omega$, \eqref{PH9} implies that $\zeta$ can be written as
\beq
\lb{PH10}
\zeta^\mu(\varepsilon)=A^\mu\left(x^\mu,\phi,\partial_\mu\phi,s^\mu\right)e^{f(x^\nu)},
\eeq
where
\beq
\lb{PH11}
\partial_\mu A^\mu\left(x^\mu,\phi,\partial_\mu\phi,s^\mu\right)=\left(\eta\dfrac{\partial \mathcal{L}}{\partial\phi}+\partial_\mu\eta\dfrac{\partial  \mathcal{L}}{\partial(\partial_\mu\phi)}\right)e^{-f(x^\nu)}.
\eeq

Now, since $S(\delta\Omega)$ attains a maximum (minimum) at $\phi^*$, we should have
\beq
\lb{PH2}
\frac{d S(\delta\Omega)}{d\varepsilon}\vert_{\varepsilon=0}=0.
\eeq
Then, from \rf{H7} and \eqref{PH2} we get
\beq
\lb{PH3}
\dfrac{dS(\delta\Omega)}{d\varepsilon}\bigg|_{\varepsilon=0}=\int_{\delta\Omega}\dfrac{ds^\mu}{d\varepsilon}n_\mu \bigg|_{\varepsilon=0}d\sigma=\int_{\delta\Omega}\zeta^\mu(0)n_\mu d\sigma=0,
\eeq
since the surface $\delta\Omega$ is independent on $\varepsilon$. Furthermore, since \eqref{PH3} should hold for any domain, we have
\beq
\lb{PH12}
\int_{\delta\Omega}\zeta^\mu(0)n_\mu d\sigma=0 \Longrightarrow\int_{\delta\Omega}A^\mu n_\mu \bigg|_{\varepsilon=0}d\sigma=\int_\Omega\partial_\mu A^\mu\bigg|_{\varepsilon=0}d^dx=0.
\eeq
Thus from \eqref{PH11} and \eqref{PH12} we get
\beq
\lb{PH13}
\begin{split}
&\int_\Omega\left[\eta\frac{\partial \mathcal{L}\left(x^\mu,\phi^*,\partial_\mu\phi^*,s^\mu\right)}{\partial\phi^*}+\partial_\mu\eta\frac{\partial \mathcal{L}\left(x^\mu,\phi^*,\partial_\mu\phi^*,s^\mu\right)}{\partial(\partial_\mu\phi^*)}\right]e^{-f(x^\nu)} d^dx\\
&=\int_\Omega\left[\frac{\partial \mathcal{L}\left(x^\mu,\phi^*,\partial_\mu\phi^*,s^\mu\right)}{\partial\phi^*}-\frac{d}{dx^\mu}\frac{\partial \mathcal{L}\left(x^\mu,\phi^*,\partial_\mu\phi^*,s^\mu\right)}{\partial(\partial_\mu\phi^*)}\right.\\
&\qquad\qquad\qquad\qquad\qquad\qquad\qquad+\left.\gamma_\mu\frac{\partial \mathcal{L}\left(x^\mu,\phi^*,\partial_\mu\phi^*,s^\mu\right)}{\partial\left(\partial_\mu\phi^*\right)}\right]e^{-f(x^\nu)}\eta d^dx
\\&+\int_\Omega\left[\frac{d}{dx^\mu}\left(\eta\frac{\partial  \mathcal{L}\left(x^\mu,\phi^*,\partial_\mu\phi^*,s^\mu\right)}{\partial(\partial_\mu\phi^*)}e^{-f(x^\nu)}\right)\right]d^dx=0.
\end{split}
\eeq
The last integral in \eqref{PH13} is zero since $\eta(\delta \Omega)=0$. Thus, from the Fundamental Lemma of calculus of variation we obtain \eqref{GHEL}.
\end{proof}

Before enunciating our Action Principle for fields and considering some applications, some remarks concerning the sufficient condition \eqref{PH9} and some particular cases, are in order:

\begin{remark}
In the proof of Theorem \ref{HELF}, in order to satisfy condition \eqref{PH8} for any domain $\Omega$, we impose the sufficient condition \eqref{PH9}. However, this is not the only possibility to satisfy \eqref{PH8}. Another way is to integrate the third term in \eqref{PH8} by parts, and the result is
\beq
\lb{PH9B}
\partial_\mu\zeta^\mu-\eta\dfrac{\partial \mathcal{L}}{\partial\phi}+\eta\frac{d}{dx^{\mu}}\dfrac{\partial \mathcal{L}}{\partial(\partial_\mu\phi)}-\gamma_\mu\zeta^\mu=0,
\eeq
instead of \eqref{PH9}. We discard this possibility since, by following exactly the same development in the proof, the traditional Euler-Lagrange equation for conservative systems is obtained,
\beq
\label{GHELB}
\dfrac{\partial\mathcal{L}}{\partial \phi^* }-\frac{d}{dx^\mu}\left(\dfrac{\partial \mathcal{L} }{\partial\left(\partial_\mu\phi^*\right)}\right)=0,
\eeq
instead of \eqref{GHEL}. Since \eqref{GHELB} implies that \eqref{PH9B} reduces to $\partial_\mu\zeta^\mu-\gamma_\mu\zeta^\mu=0$, then we get $\zeta^\mu=A^\mu e^{f(x^\nu)}$, where now $A^\mu$ is a constant. However, the condition \eqref{PH3} implies that $A^\mu=0$, resulting in $\zeta^\mu=0$ and $s^\mu$ independent on $\varepsilon$. As a consequence, we have two possibilities in this case: either the functional $S(\delta \Omega)$ and $s^\mu$ are independent of the field $\phi$ or the Lagrangian is independent of $s^\mu$. The first case in not interesting since the functional is a constant, and the second case is the classical problem of variation calculus.
\end{remark}

\begin{remark}
It is easy to see that for Lagrangian functions independent on $s^{\mu}$, the generalized Euler-Lagrange equation \rf{GHEL} reduces to the usual one,
\beq
\lb{HEL2}
\dfrac{\partial\mathcal{L}}{\partial \phi^* }-\frac{d}{dx^\mu}\left(\dfrac{\partial \mathcal{L} }{\partial\left(\partial_\mu\phi^*\right)}\right)=0,
\eeq
since, in this case, $\gamma_\mu =0$.
\end{remark}

\begin{remark} When the action-density field $s^{\mu}$ has only one non-null component and it is a function of only one variable, for example $s^1\neq 0$ and $x^1=t$, and $\Omega=[t_a,t_b]\otimes \mathbb{R}^{d-1}$, the fundamental problem in Definition \ref{HF} contains, as a particular case, the non-covariant problem introduced in \cite{GGB}. Moreover, in the latter situation , equation \eqref{H7} can be easily solved for Lagrangian functions linear on $s^1$, resulting in a $s^1$ expressed as a history-dependent function.

\end{remark}

\begin{remark}
It is straightforward to generalize  the fundamental problem \eqref{H7} and the Euler-Lagrange equation \eqref{GHEL} to the case with several fields $\phi^i(x^{\mu})=\phi^i(x^1,x^2,\cdots,x^d)$ ($i=1,...,N$). In this case we have the Action $S(\delta\Omega)$ defined by
\beq
\lb{H7B}
\begin{split}
&\partial_\mu s^\mu=\mathcal{L}\left(x^\mu,\phi^i(x^\mu),\partial_\mu\phi^i(x^\mu),s^\mu\right),\quad x^\mu=(x^1, x^2,..., x^d)\in\Omega\\
&S(\delta\Omega)=\int_{\delta\Omega}s^\mu n_\mu d\sigma,\quad\phi^i(\delta\Omega)=\phi^i_{\delta\Omega},\quad \phi^i(\delta\Omega):\delta\Omega\longrightarrow \mathbb{R},
\end{split}
\eeq
and for a Lagrangian function for which $\partial_{s^\mu}\mathcal{L}=\gamma_\mu=\partial_{\mu}f(x^\nu)$, we obtain the following set of generalized Euler-Lagrange equations:
\beq
\label{GHEL2}
\dfrac{\partial\mathcal{L}}{\partial \phi^{i*} }-\frac{d}{dx^\mu}\left(\dfrac{\partial \mathcal{L} }{\partial\left(\partial_\mu\phi^{i*}\right)}\right)+\gamma_\mu \dfrac{\partial \mathcal{L} }{\partial\left(\partial_\mu\phi^{i*}\right)}=0,\quad i=1,...,N.
\eeq
\end{remark}

We can now formulate an Action Principle suited to dissipative systems and free from difficulties found in previous approaches.
\begin{definition}[Generalized Action Principle]
\lb{AP}
The equation of motion for a physical field $\phi^i$ is the one for which the Action \eqref{H7B} is stationary.
\end{definition}

As a consequence of Definition \ref{AP}, the physical field should satisfy the generalized Euler-Lagrange equation \eqref{GHEL2}. Since for Lagrangian functions independent on the action-density our variational problem \eqref{H7B} reduces to the classical one, our generalized Action Principle is appropriate to describe both conservative and non-conservative systems. In the next section we show, through some examples, that, just like in the original Herglotz problem, our generalized Action for fields describes non-conservative systems when the Lagrangian is linear on the action-density field.


\section{Examples}\label{sec3}

In this section we show that the generalized Action Principle stated in Definition \ref{AP} enables us to construct simple and physically meaningful Action-dependent Lagrangian functions, which describe a wide range of non-conservative classical and quantum systems.

\subsection{Vibrating String under viscous forces}

In order to illustrate the potential of application of our Action Principle to investigate dissipative systems, we start this section with the simplest example for a non-conservative continuum system in classical mechanics. Let us consider a two-dimensional space-time ($d=2$), with $x_1=t$ and $x_2=x$, the Lagrangian function for a vibrating string under viscous forces can be given by
\beq
\lb{E1}
\mathcal{L}=\frac{\mu}{2}\left(\partial_t \phi\right)^2-T\left(\partial_x \phi\right)^2-\frac{\gamma}{\mu}s^1
\eeq
where $\mu$ is the mass density, $T$ is the tension, $\phi$ is the string displacement, $\gamma$ is the viscous coefficient, and we choose $\gamma_\mu=(\gamma,0)$. As in the problem of a particle under viscous forces, discussed in the beginning of Section \ref{sec2}, the last term in \eqref{E1} can be interpreted as a potential energy for the dissipative force. The first and second terms in \eqref{E1} are the kinetic energy and the elastic potential, respectively.

From the Lagrangian function \eqref{E1}, it is easy to see that our Action Principle gives the correct equation of motion for a string under the presence of a viscous force proportional to the velocity $\partial_t \phi$. By inserting \eqref{E1} into the generalized Euler-Lagrange equation \eqref{GHEL} we get
\beq
\lb{E2}
\mu \partial_{tt} \phi-T\partial_{xx}\phi+\gamma \partial_t \phi=0.
\eeq

\subsection{Non-conservative Electromagnetic Theory}
In this example, we show that a Lagrangian linear in the action-density $s^u$ enables us to formulate a non-conservative electromagnetic theory. Let the position four-vector be $x_\mu=(ct,x,y,z)$ in a Minkowski space with metric $\eta_{\mu\nu}=\mbox{diag}(1,-1,-1,-1)$. We define the Lagrangian density for the electromagnetic field as:
\begin{eqnarray}
\lb{LEMT}
\mathcal{L}=-\dfrac{1}{4}F_{\mu\nu}F^{\mu\nu}-\frac{4\pi}{c}A_\mu J^\mu-\gamma_\mu s^\mu,
\end{eqnarray}
where $F_{\mu\nu}=\partial_{\mu}A_{\nu}-\partial_{\nu}A_{\mu}$ is the standard electromagnetic tensor, $A_\mu=(\phi,A_1,A_2,A_3)$ is the potential four-vector, $J_\mu=(c\rho,J_x,J_y,J_z)=(c\rho,\boldsymbol{J})$ is the four-vector current density, and $\gamma_\mu=(\gamma_0,\gamma_1,\gamma_2,\gamma_3)=(\gamma_0,\boldsymbol{\gamma})$ is a constant four-vector. Thus, from the generalized Euler-Lagrange equation \eqref{GHEL2} we have
\begin{eqnarray}
\partial_\mu F^{\mu\nu}+\gamma_\mu F^{\mu\nu}=\frac{4\pi}{c}J^\nu,
\end{eqnarray}
which corresponds to a generalization for the first pair of Maxwell's equations in the covariant form.
The other pair is automatically obtained from the definition of the tensor $F_{\mu\nu}$ and reads
\begin{equation}
\partial_{[\mu} F_{\nu\lambda]}=\partial_\mu F_{\nu\lambda}+\partial_\nu F_{\lambda\mu}+\partial_\lambda F_{\mu\nu}=0.
\end{equation}
Consequently, the generalized Maxwell's equations for this non-conservative theory is
\beq
\lb{G4ME}
\begin{split}
\nabla\times \mathbf{B}-\frac{1}{c}\partial_t\mathbf{E}+\boldsymbol{\gamma}\times \mathbf{B}-\gamma_0\mathbf{E}&=\frac{4\pi}{c}\mathbf{J}\\
\nabla\cdot\mathbf{E}+\boldsymbol{\gamma}\cdot\mathbf{E}&=4\pi\rho\\
\nabla\times\mathbf{E}-\frac{1}{c}\partial_t\mathbf{B}&=0\\
\nabla\cdot\mathbf{B}&=0.
\end{split}
\eeq
The physical consequences of the constant four-vector $\gamma_\mu$ are evident from the field equations \eqref{G4ME}. In this non-conservative theory the vacuum behaves like a non-standard material medium, where the field induces a virtual charge density $-\frac{1}{4\pi}\boldsymbol{\gamma}\cdot\mathbf{E}$, and virtual current densities $-\frac{c}{4\pi}\boldsymbol{\gamma}\times \mathbf{B}$ and $\frac{c}{4\pi}\gamma_0\mathbf{E}$, respectively. Note that these are virtual charge and current densities since they do not represent real charge displacements in a material medium.

In order to shed light on the non-conservative effects in the electromagnetic theory given by \eqref{LEMT} when $\gamma_{\mu} \neq 0$, it is interesting to investigate the behavior of electromagnetic waves. By choosing the modified gauge
\begin{equation}
\partial_\mu A^\mu+\gamma_\mu A^\mu=0,
\end{equation}
we have the wave equation for the electromagnetic field
\begin{equation}
\square^2F^{\mu\nu}+\gamma_\alpha\partial^\alpha F^{\mu\nu}=\frac{4\pi}{c}(\partial^\mu J^\nu-\partial^\nu J^\mu)=\frac{4\pi}{c}\partial^{[\mu} J^{\nu]},
\label{WE}
\end{equation}
where $\square^ 2=\frac{1}{c^ 2}\partial_t^2-\nabla^ 2$.
This equation clearly describes a damped/forced propagating wave, whose damping/forcing depends on the four-vector $\gamma_\mu$. For simplicity, let us consider only the case where there are no sources ($ J^\mu=0$). In this case, a particular solution for the wave equation  \eqref{WE} is
\begin{equation}
\lb{WE2}
F^{\mu\nu}=F^{\mu\nu}_0e^{ik_\sigma x^\sigma},
\end{equation}
where $F^{\mu\nu}_0$ is a constant antisymmetric tensor with the null principal diagonal, and the four-vector $k_\sigma$ satisfies the relation
\begin{equation}
k_\sigma k^\sigma-i\gamma_\sigma k^\sigma=0\label{e31}.
\end{equation}
In order to better illustrate the non-conservative feature of the field, here we will consider the particular case of a wave propagating in the $z$-direction. Thus,
$F^{\mu\nu}$ is a function of the coordinate $z$ and the time $t$. Another simplification
is to consider that only the time component of the four-vector $\gamma_\mu$ is nonzero, that is, $\boldsymbol{\gamma}=0$ and $\gamma_0\neq 0$. In this case, the wave equation (\ref{WE}) provides three possible solutions for $F^{\mu\nu}$:
\begin{equation}
F_{\mu \nu}(t,z)=
\left\{
\begin{array}{ll}
F_{\mu \nu}^{(\pm)}e^{\frac{-\gamma_0 \pm \gamma^\prime}{2}ct}e^{ikz} & \mbox{if $\gamma_0^2>4k^2$};\\
\left( F_{\mu \nu}^{(+)}+F_{\mu \nu}^{(-)}ct \right) e^{-\frac{\gamma_0}{2}ct}e^{ikz} & \mbox{if $\gamma_0^2=4k^2$};\\
F_{\mu \nu}^{(\pm)}e^{\frac{-\gamma_0 \pm i\gamma^\prime}{2}ct}e^{ikz} & \mbox{if $\gamma_0^2<4k^2$},\\
\end{array}
\right. \nonumber
\end{equation}
where $\gamma^\prime\equiv\sqrt{|\gamma_0^2-4k^2|}$, and $F_{\mu \nu}^{(\pm)}$ are constant antisymmetric tensors, with $F^{(\pm)}_{03}=F^{(\pm)}_{12}=0$.
When $\gamma_0>0$ ($\gamma_0<0$) we observe three cases of damped (forced) waves and, in any of these cases, the amplitude of electromagnetic waves decreases (increases) with time. The first two cases ($\gamma_0^2>4k^2$ and $\gamma_0^2=4k^2$) correspond to stationary waves and occur for small spatial frequencies ($k \le |\gamma_0|/2$), and the third one ($\gamma_0^2<4k^2$) corresponds to traveling waves with velocity $v=\frac{\gamma^\prime}{2 k}c$, smaller than the speed of light $c$.

\subsection{Non-conservative Schr\"{o}dinger equation}
The generalized Action Principle and the resulting Euler-Lagrange equations
can also be applied to formulate a simple and unified Lagrangian function for a dissipative quantum mechanical system with both Stokes (linear on velocity) and Newton (quadratic on velocity) resistances. Let us consider a particle with mass $m$
in a non-conservative system, under the influence of a time-independent potential $V(\mathbf{r})$.
The Lagrange density for this system can be written as
\begin{eqnarray}\label{eq:lag-dens-herglotz}
\mathcal{L}=-\dfrac{\hbar^2}{2m}\nabla\Psi^*(\mathbf{r},t)\nabla\Psi(\mathbf{r},t)
-V(\mathbf{r})\Psi^*(\mathbf{r},t)\Psi(\mathbf{r},t)\\+\dfrac{i\hbar}{2}\left(\Psi^*(\mathbf{r},t)\partial_t\Psi(\mathbf{r},t)-\Psi(\mathbf{r},t)\partial_t\Psi^*(\mathbf{r},t)\right)-\gamma_\mu s^\mu
\end{eqnarray}
where $x_\mu=(t,x,y,z)=(t,\mathbf{r})$, $\partial_\mu=(\partial_t,\partial_x,\partial_y,\partial_z)=(\partial_t,\nabla)$, $\Psi(\mathbf{r},t)$ is the wave function associated to the particle,
$\gamma_\mu=\left(\gamma_0, \gamma_1, \gamma_2, \gamma_3\right)=(\gamma_0,\boldsymbol{\gamma})$ is a constant
four-vector, and $s_\mu=\left(s_0, s_1, s_2, s_3\right)=(s_0,\boldsymbol{s})$
is the action-density four-vector. Applying (\ref{GHEL2}) to (\ref{eq:lag-dens-herglotz}) we obtain the following wave equation,
\begin{equation}\label{eq:Sch-noncons}
i\hbar\partial_t\Psi(\mathbf{r},t)=-\dfrac{\hbar^2}{2m}\nabla^2\Psi(\mathbf{r},t)+V(\mathbf{r})
\Psi(\mathbf{r},t)-i\hbar\dfrac{\gamma_0}{2}\Psi(\mathbf{r},t)-\dfrac{\hbar^2}{2m}\boldsymbol{\gamma}\cdot\nabla
\Psi(\mathbf{r},t)
\end{equation}
which is the Schr\"{o}dinger equation for a non-conservative mechanical system. It is clear that, when $\gamma^\mu$ is a null four-vector, we recover the well known conservative Schr\"odinger wave mechanics. In the particular case where $\gamma_0\neq 0$ and $\boldsymbol{\gamma}=\boldsymbol{0}$, we obtain the wave equation proposed in \cite{Yao:2009}, suitable for a non-conservative force proportional to the velocity, $\mathbf{F}=-k\mathbf{v}$. On the other hand, the case $\gamma_0=0$ and $\boldsymbol{\gamma}\neq\boldsymbol{0}$ corresponds to a force proportional to the square of the velocity \cite{Razavy:2005}.

In order to compare both (conservative and non-conservative) approaches, first we can, in the non-conservative case, look at the resulting continuity equation, which can be easily obtained by considering \eqref{eq:Sch-noncons} and its complex conjugate. This equation then reads
\begin{equation}
\partial_t\rho=-\boldsymbol{\nabla}\cdot \mathbf{J}-\boldsymbol{\gamma}\cdot \mathbf{J}-\gamma_0\rho
\end{equation}
where $\rho(\mathbf{r},t)=\Psi^*(\mathbf{r},t)\Psi(\mathbf{r},t)$ is the probability density and
\begin{equation*}
  \mathbf{J}=\dfrac{\hbar}{2mi}\left(\Psi^*\nabla\Psi-\Psi\nabla\Psi^*\right)
\end{equation*}
is the probability current vector. We can integrate over all space and take into account that the wave function must go to zero in the infinity. Then, in the particular case where $\boldsymbol{\gamma=0}$ we get
\begin{equation}
\int d^3\mathbf{r}\,\rho(\mathbf{r},t)=e^{-\gamma_0t}\int_V d^3\mathbf{r}\,\rho(\mathbf{r},0).
\end{equation}
Therefore, when $\gamma_0>0$ we obtain that the probability of finding the particle in a given region of space (of volume $V$) decays exponentially with time. In other words, dissipation means that the particle escapes from an open system.

\subsection{Non-conservative Klein-Gordon equation (Telegraph Problem)}

We finish our examples with the Klein-Gordon equation, which belongs to a relativistic scalar field theory. A non-conservative formulation for the Klein-Gordon equation can be obtained through the Lagrangian function
\beq
\lb{Ea1}
L= \frac{1}{2}\partial_\mu \phi \partial^\mu \phi -\frac{m^2}{2}\phi^2-\gamma_{\mu} s^{\mu}
,
\eeq
where, as in the electromagnetic problem, $x_\mu=(ct,x,y,z)$ in a Minkowski space with metric $\eta_{\mu\nu}=\mbox{diag}(1,-1,-1,-1)$, $\gamma_\mu=\left(\gamma_0, \gamma_1, \gamma_2, \gamma_3\right)$ is a constant four-vector, and now $\phi$ is a scalar field. From the generalized Euler-Lagrange equation \eqref{GHEL2} we get the equation of motion for the field
\beq
\lb{Ea2}
\left(\partial_\mu\partial^\mu+\gamma_\mu\partial^\mu+m^2\right)\phi=0,
\eeq
which represents a damped/forced Klein-Gordon wave equation. As in the previous examples, it is clear that, when $\gamma_\mu=0$, we recover the classical problem of a conservative relativistic scalar field. Furthermore, in the particular case where $\gamma_\mu=(\gamma_0,0,0,0)$, equation \eqref{Ea2} reduces to the well know Telegraph equation \cite{Courant}
\beq
\lb{Ea3}
\square^2 \phi + \frac{\gamma_0}{c} \partial_{t} \phi+m^2 k\phi =0,
\eeq
where $\square^ 2=\frac{1}{c^ 2}\partial_t^2-\nabla^ 2$.


\section{Conclusions}\label{sec4}

In this work, we generalized the ideas introduced in \cite{MJJG} and formulated an Action Principle with Action-dependent Lagrangian that describes non-conservative systems.
In contrast to the case of gravitation \cite{MJJG}, in the present proposal, we obtained a generalized Euler-Lagrange equation for arbitrary Action-dependent Lagrangian functions. Differently, from other approaches found in the literature, our generalized Action Principle enables us to construct meaningful Lagrangian functions, which provide physically consistent expressions for the momentum and the Hamiltonian of the system.
We have shown, by working out some examples, that the simplest case of a Lagrangian linear on the action-density provides the correct equation of motion for dissipative systems, described by both classical and quantum mechanics. Our present results, as well as those obtained in the case of gravitation \cite{MJJG}, illustrates the potential of application of our generalized Action Principle in the study of a variety of non-conservative systems. Finally, in which concerns to developments and applications of this generalized Action Principle, there are many directions of investigation left to explore. As some interesting examples, we have the quantization of Action-dependent Lagrangian functions and the generalization of the Noether's theorem, in order to investigate conservative laws for non-conservative systems. These and other examples are left to future works.


\section*{Acknowledgments}
This work was partially supported by CNPq and CAPES (Brazilian research funding agencies).


\end{document}